\renewcommand{\Re}{\mathbb{R}}
\renewcommand{\paragraph}[1]{\smallskip\noindent\textbf{#1.} }
\newtheorem{prop}{\bf Proposition}
\newtheorem{lem}{\bf Lemma}
\DeclareMathOperator*{\argmin}{arg\,min}
\newtheorem{definition}{\bf Definition}
\newtheorem{remark}{Remark}
\definecolor{joblue}{rgb}{0,0,0.8}
\definecolor{jored}{rgb}{0.8,0,0}
\definecolor{jogreen}{rgb}{0.1,.8,0.1}
\definecolor{Mgreen}{rgb}{0,1,0}
\title{Learning Reduced Nonlinear State-Space Models: an Output-Error Based Canonical Approach}
\author{Steeven Janny$^{1}$, Quentin Possamaï$^{1}$, Laurent Bako$^{3}$, Madiha Nadri$^{2}$ and Christian Wolf$^{4}$ 
\thanks{$^{1}$Quentin Possamaï and Steeven Janny are with INSA-Lyon, LIRIS UMR CNRS 5205.
{\tt\small quentin.possamai@insa-lyon.fr},
{\tt\small steeven.janny@insa-lyon.fr}}%
\thanks{$^{2}$Madiha Nadri is with Université Claude Bernard Lyon 1, LAGEPP
{\tt\small madiha.nadri-wolf@univ-lyon1.fr}}%
\thanks{$^{3}$Laurent Bako is with Univ Lyon, Ecole Centrale de Lyon, INSA Lyon, Université Claude Bernard Lyon 1, CNRS, Ampère,
UMR5005, 69130 Ecully, France,
{\tt\small laurent.bako@ec-lyon.fr}}%
\thanks{$^{4}$Christian Wolf is with Naver Labs Europe, Meylan, France,
{\tt\small christian.wolf@naverlabs.com}}%
}
\begin{document}

\maketitle
\thispagestyle{empty}
\pagestyle{empty}

\begin{abstract}
The identification of a nonlinear dynamic model is an open topic in control theory, especially from sparse input-output measurements. A fundamental challenge of this problem is that very few to zero prior knowledge is available on both the state and the nonlinear system model.  To cope with this challenge, we investigate the effectiveness of deep learning in the modeling of dynamic systems with nonlinear behavior by advocating an approach which relies on three main ingredients: (i) we show that under some structural conditions on the to-be-identified model, the state can be expressed in function of a sequence of the past inputs and outputs; (ii) this relation which we call the state map can be modelled by resorting to the well-documented approximation power of deep neural networks; (iii) taking then advantage of existing learning schemes, a state-space model can be finally identified. After the formulation and analysis of the approach, we show its ability to identify three different nonlinear systems. The performances are evaluated in terms of open-loop prediction on test data generated in simulation as well as a real world data-set of unmanned aerial vehicle flight measurements.
\end{abstract} 

\textbf{Keywords:} nonlinear system identification, state-space models, model reduction, deep learning,  auto-encoding

\section{Introduction}

\noindent
A large majority of deployed methods from control theory have as a prerequisite the provision of relatively precise dynamic model characterizing the temporal evolution of the state variables at stake. This model generally plays a central role, since the performance of the method is often directly related to the accuracy of the model \cite{weinmann2012uncertain,cheah2006adaptive,bauersfeld_neurobem_2021, busoniu_reinforcement_2018, bemporad2006model}. Consequently, modeling and identification of a dynamic system is an essential preliminary step, since it will serve as foundation for additional processing, such as controller or observer design. However, this is not a trivial task: in the general case, the system is complex, non-linear, and involves physical phenomena that are often difficult or even impossible to model correctly without strongly impacting the required computation time. On the other hand, the identification of the parameters of a non-linear model is a non-convex problem, which can require tremendous hours of calibrations and experiments. Moreover, identification of a dynamic system often requires the intervention of domain experts and the ability to freely interact with the system.


The development of data-driven techniques for the identification of non-linear systems has provided a promising response to these issues and has received great interest over the last decade (see for example \cite{LJUNG20201175}, \cite{MASTI2021}, \cite{PILLONETTO2014657}). Specifically, neural networks propose to remove the burden of modeling by replacing it with the collection of massive datasets from the system of interest. Modeling methods based on Deep Learning constitute an alternative solution to painstaking physical modeling. The main idea is based on the use of an extremely versatile model, capable of approaching most dynamics with a certain degree of precision, that can be directly identified from pairs of input-output measurements in the case of dynamic systems, provided that theses measurements gather enough information necessary to approximate the true dynamic. Nevertheless, the great flexibility of neural networks comes at the cost of a lack of mathematical structure making it difficult to perform theoretical analysis in terms of robustness, precision and stability. Moreover, learning complex, high-dimensional dynamical systems is not straightforward. The general formulation leads to latent dynamic models lacking of meaningful physical structure and requires large dimensional state spaces.

In this article, we propose a new identification structure for nonlinear state-space systems from a set of observation trajectories and associated inputs. We demonstrate the existence of a regressor inspired by finite impulse response models allowing to map a series of past observations to future outputs, and provide bounds derived from the prediction error during deployment. We then deduced a high-dimensional canonical state-space model discovered using an output-error based approach and propose to learn an auto-encoder projecting the dynamics into a smaller state-space. We evaluate our proposal on different systems in simulation and in the real world.

\section{Related Work}

\noindent
Data-driven dynamic models are widely studied in the community and get a lot of attention. In particular, \cite{brunton2016discovering,Sahoo2018Learning,Chen2021physicsinformed} propose to find governing physical equations by performing a sparse regression from the data. At the junction between physical model and learning, \cite{yin2021augmenting,Long2018HybridNetIM,qi2019integrating,mehta2021neural} use neural networks to model complementary phenomena not described by the initial physical model. In particular, \cite{shi_neural_2019,bauersfeld_neurobem_2021,possamai2022learning} extend the dynamic model of a unmanned aerial vehicle with a neural network in charge of predicting aerodynamic disturbances, which are often very demanding and intractable for real time physical simulation.

Close to our work, a body of literature proposes to use deep learning for the identification of latent models, that is to say without direct physical meaning. This is notably the case for recent works around the Koopman operator \cite{lusch2018deep,janny2021deepkkl,peralez2020datadriven,rowley2009spectral,buissonfenet2022towards}. Another solution is to use an auto-encoder structure to model the latent dynamics of a system from past observations, following \cite{MASTI2021,beintema2021}. Our proposal differs from this line of works in three main points: (1) we provide theoretical results and conditions for the existence of the dynamical system that we identify, (2) we propose to use a high-dimensional regressor structure without explicit state representation, which will be deduced from a dimensionality reduction operation and (3) we evaluate our approach on challenging and unstable systems.


\paragraph{Notation} $\Re$ and $\mathbb{N}$ denote the sets of reals and natural numbers respectively. $\left\|\cdot\right\|$ refers to a generic norm on some appropriate space. 

\section{Problem statement and preliminary results}
\subsection{Problem statement}

\noindent
We consider a nonlinear discrete-time system of the general form
\begin{equation} \label{eq_general_form}
    \left\{
\begin{array}{rcl}
     x_{t+1} &=& f^\circ(x_t, u_t) \\
     y_t &=&  h^\circ(x_t,u_t) +w_t,
\end{array}
\right.
\end{equation}
with $x_t\in \mathcal{X}\subset\Re^{n_x}$, $u_t\in \mathcal{U}\subset \Re^{n_u}$, $y_t\in \mathcal{Y}\subset \Re^{n_y}$ being the state, the input and the output of the system at discrete time $t\in \mathbb{N}$ respectively. $f^\circ:\Re^{n_x}\times \Re^{n_u}\rightarrow \Re^{n_x}$ and $h^\circ:\Re^{n_x}\times \Re^{n_u}\rightarrow \Re^{n_y}$ are some nonlinear vector-valued functions. As to $w_t\in \mathcal{W}\subset \Re^{n_y}$, it represents measurement noise.   
We will make the following  important assumptions: 
\begin{enumerate}
\item The external signals $u$ and $w$ take values in compact sets $\mathcal{U}$ and $\mathcal{W}$ respectively.
\label{assump:Compacts}
\item The state-space $\mathcal{X}$ is a known compact set  containing the initial state $x_0$. 
\item $(\mathcal{X},\mathcal{U},\mathcal{W},\mathcal{Y})$ and  $(f^\circ,h^\circ)$ satisfy the following invariance conditions:
$$
\begin{aligned}
& \forall (x,u)\in \mathcal{X}\times \mathcal{U}, f^\circ(x,u)\in \mathcal{X}  \\
& \forall (x,u,w)\in \mathcal{X}\times \mathcal{U}\times \mathcal{W}, h^\circ(x,u)+w \in \mathcal{Y} 
\end{aligned}
$$
\label{assump:invariance}

\item $f^\circ$ (and $h^\circ$) are uniformly Lipschitz continuous on $\mathcal{X}\times \mathcal{U}\subset \Re^{n_x}\times \Re^{n_u}$ with respect to $\mathcal{U}$, i.e., there exists a constant $\gamma_f>0$ such that  $\|f^\circ(x,u)-f^\circ(x',u)\|\leq \gamma_f\|x-x'\|$ for all $(x,x',u)\in \mathcal{X}\times \mathcal{X}\times \mathcal{U}$.  
\label{assump:Lipschitz}
\end{enumerate}	
The assumptions \ref{assump:Compacts}--\ref{assump:Lipschitz} are required essentially to theoretically ensure the well-definedness of optimization problems that will be expressed later in the paper. Of course in the context of system identification,  such types of assumptions are not intended to be checked prior to applying the method to be developed.

The problem of interest in this paper can be stated as follows: 
Given a finite number $N$ of input-output data pairs $\left\{(u_t,y_t):t=1,\ldots,N\right\}$ generated by a nonlinear system of the form \eqref{eq_general_form}, find an appropriate dimension $n_x$ of a state-space representation along with estimates of the associated  functions $f^\circ$ and $h^\circ$. 
Here, the number $n_y$ of outputs and the number $n_u$ of inputs  are known a priori. However the dimension $n_x$ of the state is a parameter of the model which needs to be estimated along with the maps $(f^\circ,h^\circ)$. 

We develop a solution in three steps: first, a nonlinear regression model is derived from the data-generating system equations in   \eqref{eq_general_form}. The underlying nonlinear map is then modelled by a deep neural network structure and trained with the available data following an output-error framework. Given this map, one can readily form an equivalent canonical state-space representation of \eqref{eq_general_form} with, however, the drawback that its dimension may be high. Hence, the third and last step of the proposed procedure consists in model reduction\footnote{By  state-space model reduction, we mean the reduction of the state dimension. The process aims at finding  another state-space model which is as close as possible to the primary one but with a compressed state. }, an objective which is achieved through the design of an appropriate  encoder-decoder.

\subsection{Preliminary results}
\label{subsec:preliminary_results}
\noindent An important challenge concerning the identification of the system \eqref{eq_general_form} is the fact that the state $x_t$ is not entirely measured. We therefore need to express it first as a function of the available past input-output measurements $\left\{(u_\tau,y_\tau): \tau<t\right\}$. 
Indeed, if the noise $w_t$ in  \eqref{eq_general_form} is assumed to be identically equal to zero, then under appropriate observability conditions on the system, there exists a time horizon $\ell$ and a map $\phi:\Re^{L}\rightarrow \Re^{n_x}$, with $L=\ell(n_u+n_y)$, such that the state $x_t$ can be written as
\begin{equation}\label{eq:state-function}
x_t = \phi(z_t) 
\end{equation}
where 
\begin{equation}\label{eq:zt}
z_t=\begin{pmatrix}u_{t-\ell}^{\top} & y_{t-\ell}^{\top}&  \cdots  & u_{t-1}^{\top} & y_{t-1}^{\top}\end{pmatrix}^\top
\end{equation}
is the so-called regressor vector. 
To show the existence of such a map $ \phi $, some observability conditions on the system to be identified \eqref{eq_general_form} are needed. For this purpose let us start by introducing some notations. 
For a positive integer $i$, let $F_i:\Re^{n_x}\times \Re^{in_u}\rightarrow \Re^{n_x}$ be the map defined recursively from the function $f^\circ$ in \eqref{eq_general_form} as follows: for $x\in \Re^{n_x}$ and $(u_1,\ldots,u_i)\in \Re^{in_u}$, 
$F_1(x,u_1)=f^\circ(x,u_1)$ and for all $i\geq 2$, 
\begin{equation}\label{eq:Fi}
F_i(x,u_1,\ldots,u_i)=f^\circ\big(F_{i-1}(x,u_1,\ldots,u_{i-1}),u_i\big).
\end{equation}
Before proceeding further, let us mention a useful property of the maps $F_i$. 
\begin{lem}\label{lem:Lipschitz}
Under Assumption \ref{assump:invariance}, if $f^\circ:\Re^{n_x} \times \Re^{n_u}\rightarrow \Re^{n_x}$ is uniformly $\gamma_f$-Lipschitz on $\mathcal{X}\times \mathcal{U}$ with respect to $\mathcal{U}$, then the map $F_i$ defined in  \eqref{eq:Fi} is uniformly $\gamma_f^i-$Lipschitz on $\mathcal{X}\times\mathcal{U}^i$ with respect to  $\mathcal{U}^i\subset \Re^{in_u}$. 
\end{lem}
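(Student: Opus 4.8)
The plan is to proceed by induction on the index $i$, exploiting the recursive definition \eqref{eq:Fi} together with the invariance condition of Assumption \ref{assump:invariance}. The base case $i=1$ is immediate: since $F_1=f^\circ$, the hypothesis that $f^\circ$ is uniformly $\gamma_f$-Lipschitz on $\mathcal{X}\times\mathcal{U}$ with respect to $\mathcal{U}$ is exactly the claim for the exponent $\gamma_f^1=\gamma_f$.

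Before treating the inductive step, I would first record the auxiliary fact that each $F_i$ maps $\mathcal{X}\times\mathcal{U}^i$ into $\mathcal{X}$. This also follows by a one-line induction: the inclusion $F_1(\mathcal{X}\times\mathcal{U})\subseteq\mathcal{X}$ is the first invariance condition in Assumption \ref{assump:invariance}, and if $F_{i-1}(x,u_1,\ldots,u_{i-1})\in\mathcal{X}$ then $F_i(x,u_1,\ldots,u_i)=f^\circ\big(F_{i-1}(x,u_1,\ldots,u_{i-1}),u_i\big)\in\mathcal{X}$, again by invariance, for every $u_i\in\mathcal{U}$. This seemingly bookkeeping step is in fact the crux of the argument: it guarantees that the inner argument $F_{i-1}(x,\ldots)$, at which we will evaluate the Lipschitz bound for $f^\circ$, actually lies in the set $\mathcal{X}$ on which that bound is assumed to hold.

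For the inductive step, assume $F_{i-1}$ is uniformly $\gamma_f^{i-1}$-Lipschitz on $\mathcal{X}\times\mathcal{U}^{i-1}$ with respect to its first argument. Fixing a common control sequence $(u_1,\ldots,u_i)\in\mathcal{U}^i$ and two states $x,x'\in\mathcal{X}$, I would use the recursion \eqref{eq:Fi}, the fact (just established) that $F_{i-1}(x,u_1,\ldots,u_{i-1})$ and $F_{i-1}(x',u_1,\ldots,u_{i-1})$ both lie in $\mathcal{X}$, and then the Lipschitz property of $f^\circ$ in its state argument, to bound
\begin{align*}
\|F_i(x,u_1,\ldots,u_i)-F_i(x',u_1,\ldots,u_i)\|
&= \big\|f^\circ\big(F_{i-1}(x,u_1,\ldots,u_{i-1}),u_i\big)-f^\circ\big(F_{i-1}(x',u_1,\ldots,u_{i-1}),u_i\big)\big\| \\
&\leq \gamma_f\,\big\|F_{i-1}(x,u_1,\ldots,u_{i-1})-F_{i-1}(x',u_1,\ldots,u_{i-1})\big\| \\
&\leq \gamma_f\cdot\gamma_f^{i-1}\,\|x-x'\| = \gamma_f^i\,\|x-x'\|,
\end{align*}
where the last inequality is the induction hypothesis. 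Since this bound is uniform over the choice of $(u_1,\ldots,u_i)\in\mathcal{U}^i$, it closes the induction and establishes the claim.

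As for the main difficulty, the chain of inequalities is entirely routine, so the only genuine subtlety is ensuring that the Lipschitz estimate for $f^\circ$ may legitimately be invoked at each composition, which requires the intermediate iterates to remain inside $\mathcal{X}$. This is precisely the role of Assumption \ref{assump:invariance} in the hypotheses, and it is why I would isolate the invariance-of-$F_i$ observation before running the Lipschitz induction.
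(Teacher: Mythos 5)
Your proof is correct and is precisely the routine induction on $i$ that the paper declares ``straightforward'' and omits; the base case, the invariance bookkeeping ensuring the intermediate iterates $F_{i-1}(x,u_1,\ldots,u_{i-1})$ stay in $\mathcal{X}$, and the chained Lipschitz estimate are all exactly what is needed. You have also correctly identified why Assumption \ref{assump:invariance} appears in the hypotheses, which is the only non-mechanical point in the argument.
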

\begin{proof}
The proof of this lemma is straightforward and is therefore omitted. 
\end{proof}
Now consider the function $\mathcal{O}_i:\Re^{n_x}\times \Re^{in_u}\rightarrow \Re^{in_y}$ given by 
\begin{equation}\label{eq:Observability-Function}
\mathcal{O}_i(x,u_1,\ldots,u_i) =  \begin{pmatrix}h^\circ(x,u_1)\\ h^\circ\big(F_{1}(x,u_1),u_2\big)\\ \vdots \\  h^\circ\big(F_{i-1}(x,u_1,\ldots,u_{i-1}),u_i\big)\end{pmatrix}.
\end{equation}
For notational simplicity, let us pose $\bar{u}_{1|i}=(u_1,\ldots,u_i)$ so that $\mathcal{O}_i(x,u_1,\ldots,u_i)$ in  the previous equality can be replaced by $\mathcal{O}_i(x,\bar{u}_{1|i})$. 
\begin{definition}\label{def:observability}
The system \eqref{eq_general_form} is said to be  finite-time observable over a time horizon $r\in \mathbb{N}$  if for each $\bar{u}\in \mathcal{U}^{r }$, the function $\mathcal{O}_r(\cdot,\bar{u})$, with $\mathcal{O}_r$ defined as in \eqref{eq:Observability-Function}, is injective. 
\end{definition}

Note that if the observability property in Definition \ref{def:observability} holds for some $r\in \mathbb{N}$ then it holds as well for any $i\geq r$. 

\begin{prop}[Existence of the map $\phi$]
If the nonlinear system \eqref{eq_general_form} (considered under the assumption that $w\equiv 0$) is  finite-time observable in the sense of Definition \ref{def:observability}, then there exist $\ell\in \mathbb{N}$ and a (nonlinear) map $\phi:\Re^{L}\rightarrow \Re^{n_x}$ such that 
\eqref{eq:state-function} holds 
for all time $t\geq \ell$, any initial state  in $\mathcal{X}$ and any input signal taking values in $\mathcal{U}$.  
\end{prop}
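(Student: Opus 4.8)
The plan is to take the observability horizon $\ell=r$ and exploit the injectivity of $\mathcal{O}_\ell(\cdot,\bar{u})$ to reconstruct the state $x_{t-\ell}$ at the beginning of the regressor window, and then to propagate it forward through $F_\ell$ in order to recover $x_t$. The starting observation is that, by the very construction of $\mathcal{O}_\ell$ in \eqref{eq:Observability-Function} together with the dynamics \eqref{eq_general_form} under $w\equiv 0$, applying $\mathcal{O}_\ell$ to the state $x_{t-\ell}$ and the input window $\bar{u}_{t-\ell|t-1}=(u_{t-\ell},\ldots,u_{t-1})$ returns exactly the stack of measured outputs over that window,
\begin{equation}\label{eq:obs-equals-outputs}
\mathcal{O}_\ell\big(x_{t-\ell},\bar{u}_{t-\ell|t-1}\big)=\begin{pmatrix}y_{t-\ell}^\top & \cdots & y_{t-1}^\top\end{pmatrix}^\top .
\end{equation}
Indeed, by \eqref{eq:Fi} the argument $F_{k-1}(x_{t-\ell},\ldots)$ appearing in the $k$-th block of \eqref{eq:Observability-Function} is precisely the propagated state $x_{t-\ell+k-1}$, so that block reduces to $h^\circ(x_{t-\ell+k-1},u_{t-\ell+k-1})=y_{t-\ell+k-1}$.

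Since the system is finite-time observable over the horizon $r=\ell$, the map $\mathcal{O}_\ell(\cdot,\bar{u})$ is injective for every fixed input window $\bar{u}\in\mathcal{U}^{\ell}$. Hence, for each such $\bar{u}$, it admits a left inverse on its image, and I would use this inverse to reconstruct $x_{t-\ell}$ from the output stack of \eqref{eq:obs-equals-outputs}, given the corresponding inputs. Because the regressor $z_t$ in \eqref{eq:zt} contains precisely both the input window $\bar{u}_{t-\ell|t-1}$ and the output stack $(y_{t-\ell}^\top,\ldots,y_{t-1}^\top)^\top$, the quantity $x_{t-\ell}$ becomes a well-defined function of $z_t$ alone; write $x_{t-\ell}=\psi(z_t)$.

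It then remains only to advance the recovered state by $\ell$ steps. Applying the flow map $F_\ell$ of \eqref{eq:Fi} to $x_{t-\ell}$ along the same input window gives $x_t=F_\ell(x_{t-\ell},\bar{u}_{t-\ell|t-1})$, so composing with $\psi$ yields
\begin{equation}\label{eq:phi-def}
x_t=F_\ell\big(\psi(z_t),\bar{u}_{t-\ell|t-1}\big)=:\phi(z_t),
\end{equation}
where $\bar{u}_{t-\ell|t-1}$ is itself extracted from $z_t$. This identity holds for every $t\geq\ell$, every $x_0\in\mathcal{X}$, and every input valued in $\mathcal{U}$, which is the claim.

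The main obstacle is the rigorous construction and domain of $\psi$. Injectivity of $\mathcal{O}_\ell(\cdot,\bar{u})$ only furnishes a left inverse on the image $\mathcal{O}_\ell(\mathcal{X},\bar{u})$, i.e. on the set of output stacks actually produced by the system, so $\psi$ (and therefore $\phi$) is intrinsically constrained only on the set of admissible regressors, a strict subset of $\Re^{L}$; on the complement the value of $\phi$ is immaterial for \eqref{eq:state-function} and can be fixed by any extension. One must also verify that the inverse can be selected consistently as $\bar{u}$ ranges over $\mathcal{U}^{\ell}$, so that $\psi$ is a genuine single-valued map of $z_t$; this is guaranteed by the pointwise-in-$\bar{u}$ injectivity. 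If continuity of $\phi$ is desired, it can be secured from continuity of $f^\circ,h^\circ$ and compactness of $\mathcal{X}\times\mathcal{U}^{\ell}$ (under Assumption~\ref{assump:invariance}) via a standard continuous-extension argument.
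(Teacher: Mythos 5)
Your proof is correct and follows essentially the same route as the paper's: identify the stacked outputs with $\mathcal{O}_\ell(x_{t-\ell},\bar{u}_{t-\ell|t-1})$, invert via finite-time observability to recover $x_{t-\ell}$ from $z_t$, and push forward through $F_\ell$ to define $\phi$. Your additional remarks on the domain of the left inverse and the extension of $\phi$ beyond the set of admissible regressors are a point the paper leaves implicit, but they do not change the argument.
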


\begin{proof}
For discrete time indices $(i,j)$ with $i\leq j$, let $\bar{y}_{i|j}=\begin{pmatrix}y_i^\top & \cdots & y_j^\top\end{pmatrix}^\top$ be a vector of outputs from  time $i$ to time $j$. Likewise define $\bar{u}_{i|j}=\begin{pmatrix}u_i^\top & \cdots & u_j^\top\end{pmatrix}^\top$.

By iterating the system equations, it is easy to see that 
\begin{equation}\label{eq:DataEq}
	\bar{y}_{t-\ell|t-1} = \mathcal{O}_\ell(x_{t-\ell},\bar{u}_{t-\ell|t-1}).
\end{equation}
By the finite-time observability assumption of the system,  $\mathcal{O}_\ell(\cdot,\bar{u})$ admits an inverse for any given $\bar{u}\in \mathcal{U}^{\ell}$. Denote with $\mathcal{O}_\ell^*(\cdot,\bar{u}):\Re^{\ell n_y}\rightarrow \Re^{n_x}$ the inverse map of $\mathcal{O}_\ell(\cdot,\bar{u})$ which is such that 
$\mathcal{O}_\ell^*\left(\mathcal{O}_\ell(x,\bar{u}),\bar{u}\right) = x.$ 

\noindent It hence follows from \eqref{eq:DataEq} that  
\begin{equation}\label{eq:Inverse-x}
x_{t-\ell}=\mathcal{O}_\ell^*\left(\bar{y}_{t-\ell|t-1},\bar{u}_{t-\ell|t-1}\right)
\end{equation}
which, by recursively applying  the first equation of \eqref{eq_general_form}, gives 
\begin{equation}\label{eq:xt=phi(zt)}
x_t = F_\ell\left(\mathcal{O}_\ell^*\left(\bar{y}_{t-\ell|t-1},\bar{u}_{t-\ell|t-1}\right),\bar{u}_{t-\ell|t-1}\right) \triangleq \phi(z_t). 
\end{equation}
\end{proof}
Consider now the more realistic scenario where the (unknown) measurement noise sequence $\left\{w_t\right\}$ is nonzero.
Then Eq. \eqref{eq:DataEq} becomes 
\begin{equation}\label{eq:DataEq-Noise}
\bar{y}_{t-\ell|t-1} = \mathcal{O}_\ell(x_{t-\ell},\bar{u}_{t-\ell|t-1})+\bar{w}_{t-\ell|t-1}.
\end{equation}
As a consequence, the state can no longer be obtained exactly by Eq.  \eqref{eq:Inverse-x} or \eqref{eq:xt=phi(zt)}  since $\bar{y}_{t-\ell|t-1}$ does not lie in the range of $\mathcal{O}_\ell(\cdot,\bar{u}_{t-\ell|t-1})$. 
Let in this case the state ${x}_{t-\ell}$ and $x_t$ be estimated by
\begin{align}
&\hat{x}_{t-\ell} \in \argmin_{x\in \mathcal{X}}\left\|\bar{y}_{t-\ell|t-1}-\mathcal{O}_\ell(x,\bar{u}_{t-\ell|t-1})\right\| \label{eq:xhatPast}\\
&\hat{x}_t =  F_\ell\left(\hat{x}_{t-\ell},\bar{u}_{t-\ell|t-1}\right), \label{eq:xhat}
\end{align}
for some norm $\left\|\cdot \right\|$ on $\Re^{\ell n_y}$. 
The optimization problem  in \eqref{eq:xhatPast} is well-defined since, by Assumptions \ref{assump:Compacts}-\ref{assump:Lipschitz}, the function  $x \mapsto \left\|\bar{y}_{t-\ell|t-1}-\mathcal{O}_\ell(x,\bar{u}_{t-\ell|t-1})\right\|$ is defined on a compact set $\mathcal{X}$ and is continuous.  These, by the extreme value theorem, are sufficient conditions for the existence  of a minimum and for the existence of the minimizer $\hat{x}_{t-\ell}$ as defined above. 

In contrast, the  estimates $\hat{x}_{t-\ell}$ and $\hat{x}_{t}$ need not be uniquely defined in a general setting. Uniqueness would require some more strict conditions on the system under consideration. Here, we will be content with a set-valued version $\hat{\phi}$ of $\phi$ in the noisy estimation scenario.  Hence let 
$\hat{\phi}$ be defined by 
$$
\hat{\phi}(z_t) = \left\{F_\ell\left(\hat{x}_{t-\ell},\bar{u}_{t-\ell|t-1}\right) : \hat{x}_{t-\ell} \mbox{ as in \eqref{eq:xhatPast} }\right\}. 
$$
A question we ask now is how far the noisy estimate \eqref{eq:xhat} lies from the true state $x_t$. 
To study this, a stronger notion of observability is introduced as follows. 
\begin{definition}\label{def:Uniform-Observability}
The system \eqref{eq_general_form} is called \textit{finite-time uniformly observable} over a time horizon $\ell\in \mathbb{N}$ if there exists a constant $\alpha_\ell>0$ such that for each $\bar{u}\in \mathcal{U}^{\ell }$, 
\begin{equation}\label{eq:Uniform-Observability}
\left\|\mathcal{O}_\ell(x,\bar{u})-\mathcal{O}_\ell(x',\bar{u})\right\| \geq \alpha_\ell\left\|x-x'\right\| 
\end{equation}
 for all $(x,x')\in \Re^{n_x}\times \Re^{n_x}$. Here $\left\|\cdot \right\|$ denotes a generic norm defined on appropriate spaces. 
\end{definition}
Based on this property, it is possible to bound the error between the noisy estimate \eqref{eq:xhat} and the true state. 
\begin{prop}\label{prop:error-bound}
Under Assumptions \ref{assump:Compacts}--\ref{assump:Lipschitz}, if the  system \eqref{eq_general_form} is \textit{finite-time uniformly observable} over a time horizon $\ell\in \mathbb{N}$ in the sense of Definition \ref{def:Uniform-Observability}, then 
\begin{equation}\label{eq:bound}
\left\|\hat{x}_{t}-x_{t} \right\|\leq 2 \gamma_f^\ell \alpha_\ell^{-1}\left\|\bar{w}_{t-\ell|t-1} \right\|
\end{equation}
where $\gamma_f$ is the Lipschitz constant of $f^\circ$ (See Assumption \ref{assump:Lipschitz}) and $\alpha_{\ell}$ is the constant appearing in \eqref{eq:Uniform-Observability}.  
\end{prop}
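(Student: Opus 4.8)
The plan is to split the error $\|\hat{x}_t-x_t\|$ by means of the recursion that defines both quantities, and then control each piece using the two structural properties already in hand: the Lipschitz bound of Lemma~\ref{lem:Lipschitz} and the uniform-observability inequality of Definition~\ref{def:Uniform-Observability}.

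First I would note that the estimate and the true state at time $t$ are images of their time-$(t-\ell)$ counterparts under the \emph{same} map $F_\ell(\cdot,\bar{u}_{t-\ell|t-1})$: indeed $\hat{x}_t = F_\ell(\hat{x}_{t-\ell},\bar{u}_{t-\ell|t-1})$ by \eqref{eq:xhat}, while $x_t = F_\ell(x_{t-\ell},\bar{u}_{t-\ell|t-1})$ follows by iterating the first equation of \eqref{eq_general_form}. Since $F_\ell$ is $\gamma_f^\ell$-Lipschitz in its state argument (Lemma~\ref{lem:Lipschitz}), this step reduces the problem to bounding the state error at time $t-\ell$, because $\|\hat{x}_t-x_t\|\leq \gamma_f^\ell\|\hat{x}_{t-\ell}-x_{t-\ell}\|$.

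Next I would invoke uniform observability to turn that state error into a difference of output residuals: by \eqref{eq:Uniform-Observability}, $\|\hat{x}_{t-\ell}-x_{t-\ell}\|\leq \alpha_\ell^{-1}\|\mathcal{O}_\ell(\hat{x}_{t-\ell},\bar{u}_{t-\ell|t-1})-\mathcal{O}_\ell(x_{t-\ell},\bar{u}_{t-\ell|t-1})\|$. The crux—the one place needing more than a substitution—is to bound this last difference by $2\|\bar{w}_{t-\ell|t-1}\|$. I would insert $\bar{y}_{t-\ell|t-1}$ as an intermediate point and apply the triangle inequality. The residual at the true state is exactly $\|\bar{w}_{t-\ell|t-1}\|$ by the noisy data equation \eqref{eq:DataEq-Noise}. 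The residual at the estimate is at most $\|\bar{w}_{t-\ell|t-1}\|$ as well: since $\hat{x}_{t-\ell}$ minimizes $x\mapsto\|\bar{y}_{t-\ell|t-1}-\mathcal{O}_\ell(x,\bar{u}_{t-\ell|t-1})\|$ over $\mathcal{X}$ by \eqref{eq:xhatPast}, and the true state $x_{t-\ell}$ is itself feasible (it lies in $\mathcal{X}$, which is compact and invariant under the dynamics; cf. Assumption~\ref{assump:invariance}), the minimal value cannot exceed the value attained at $x_{t-\ell}$, namely $\|\bar{w}_{t-\ell|t-1}\|$. Adding the two residual bounds produces the factor $2$.

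Chaining the three inequalities then yields $\|\hat{x}_t-x_t\|\leq \gamma_f^\ell\cdot\alpha_\ell^{-1}\cdot 2\|\bar{w}_{t-\ell|t-1}\|$, which is precisely \eqref{eq:bound}. I expect the only delicate point to be the optimality argument that delivers the factor $2$; the remaining steps are direct applications of the Lipschitz constant $\gamma_f^\ell$ and the observability constant $\alpha_\ell$. It is worth double-checking that $x_{t-\ell}$ is genuinely admissible in the minimization, but this is guaranteed by the invariance assumption together with $x_0\in\mathcal{X}$, so no extra hypothesis is needed.
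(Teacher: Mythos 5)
Your proposal is correct and follows essentially the same route as the paper's proof: the factor $2$ comes from combining the optimality of $\hat{x}_{t-\ell}$ with the noisy data equation via the triangle inequality, uniform observability converts the output discrepancy into a state error at time $t-\ell$, and Lemma~\ref{lem:Lipschitz} propagates it forward through $F_\ell$. The only cosmetic difference is that you insert $\bar{y}_{t-\ell|t-1}$ as an intermediate point in the ordinary triangle inequality while the paper uses the reverse form $\|A+\bar{w}\|\geq\|A\|-\|\bar{w}\|$, and you explicitly verify the feasibility of $x_{t-\ell}$ in the minimization, a point the paper leaves implicit.
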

\begin{proof}
It follows from the definition \eqref{eq:xhatPast} of $\hat{x}_{t-\ell}$  that
$$\begin{aligned}
\left\|\bar{y}_{t-\ell|t-1}-\mathcal{O}_\ell(\hat{x}_{t-\ell},\bar{u}_{t-\ell|t-1})\right\|&\\
&\hspace{-2cm}\leq \left\|\bar{y}_{t-\ell|t-1}-\mathcal{O}_\ell(x,\bar{u}_{t-\ell|t-1})\right\|,
\end{aligned}$$
for all $x\in \Re^{n_x}$. In particular this inequality holds for $x=x_{t-\ell}$. By then invoking \eqref{eq:DataEq-Noise} we get 
$$
\begin{aligned}
&\left\|\mathcal{O}_\ell(x_{t-\ell},\bar{u}_{t-\ell|t-1}) -\mathcal{O}_\ell(\hat{x}_{t-\ell},\bar{u}_{t-\ell|t-1})+\bar{w}_{t-\ell|t-1}\right\|\\
&\hspace{4cm}\leq \left\|\bar{w}_{t-\ell|t-1})\right\|.
\end{aligned}
$$
By the triangle inequality property of norms, it follows that 

\begin{gather*}
\left\|\mathcal{O}_\ell(x_{t-\ell},\bar{u}_{t-\ell|t-1}) -\mathcal{O}_\ell(\hat{x}_{t-\ell},\bar{u}_{t-\ell|t-1})+\bar{w}_{t-\ell|t-1}\right\|\\
\geq \left\|\mathcal{O}_\ell(x_{t-\ell},\bar{u}_{t-\ell|t-1}) -\mathcal{O}_\ell(\hat{x}_{t-\ell},\bar{u}_{t-\ell|t-1})\right\| \\
-\left\|\bar{w}_{t-\ell|t-1}\right\|
\end{gather*}

Combining with the previous inequality yields
$$
\begin{aligned}
\alpha_\ell \left\|\hat{x}_{t-\ell}-{x}_{t-\ell}\right\|&\\ 
&\hspace{-2cm}\leq \left\|\mathcal{O}_\ell(x_{t-\ell},\bar{u}_{t-\ell|t-1}) -\mathcal{O}_\ell(\hat{x}_{t-\ell},\bar{u}_{t-\ell|t-1})\right\|\\
&\hspace{-2cm}\leq 2\left\|\bar{w}_{t-\ell|t-1})\right\|.
\end{aligned}
$$
Here, the first inequality is a consequence of the assumption of uniform finite-time observability. 
As a consequence, $\left\|\hat{x}_{t-\ell}-{x}_{t-\ell}\right\|\leq 2\alpha_\ell^{-1}\left\|\bar{w}_{t-\ell|t-1})\right\| $. 
The result follows now by applying \eqref{eq:xhat}, the uniform Lipschitz assumption on $f^\circ$ stated in \ref{assump:Lipschitz} and Lemma \ref{lem:Lipschitz}.  
\end{proof}
It can observed from the expression of the error bound \eqref{eq:bound} that the more strongly the system is observable (that is, the larger the constant $\alpha_\ell$), the more robust the estimate $\hat{x}_t$. 
Indeed $\alpha_\ell$, when it exists, can be defined as 
    $$\inf_{\substack{\bar{u},x,x'\in \mathcal{U}^{\ell}\times \mathcal{X}\times \mathcal{X}\\x\neq x'}}\dfrac{\left\|\mathcal{O}_\ell(x,\bar{u})-\mathcal{O}_\ell(x',\bar{u})\right\|}{\left\|x-x'\right\|}. 
    $$
\section{Modeling and learning}
\label{sec:modelingandlearning}
\subsection{Nonlinear regression model}
\label{subsec:nonlinear_regression_model}

\begin{figure}[t]
    \centering
    \begin{subfigure}[b]{\textwidth}
         \centering
         \includegraphics[width=\textwidth]{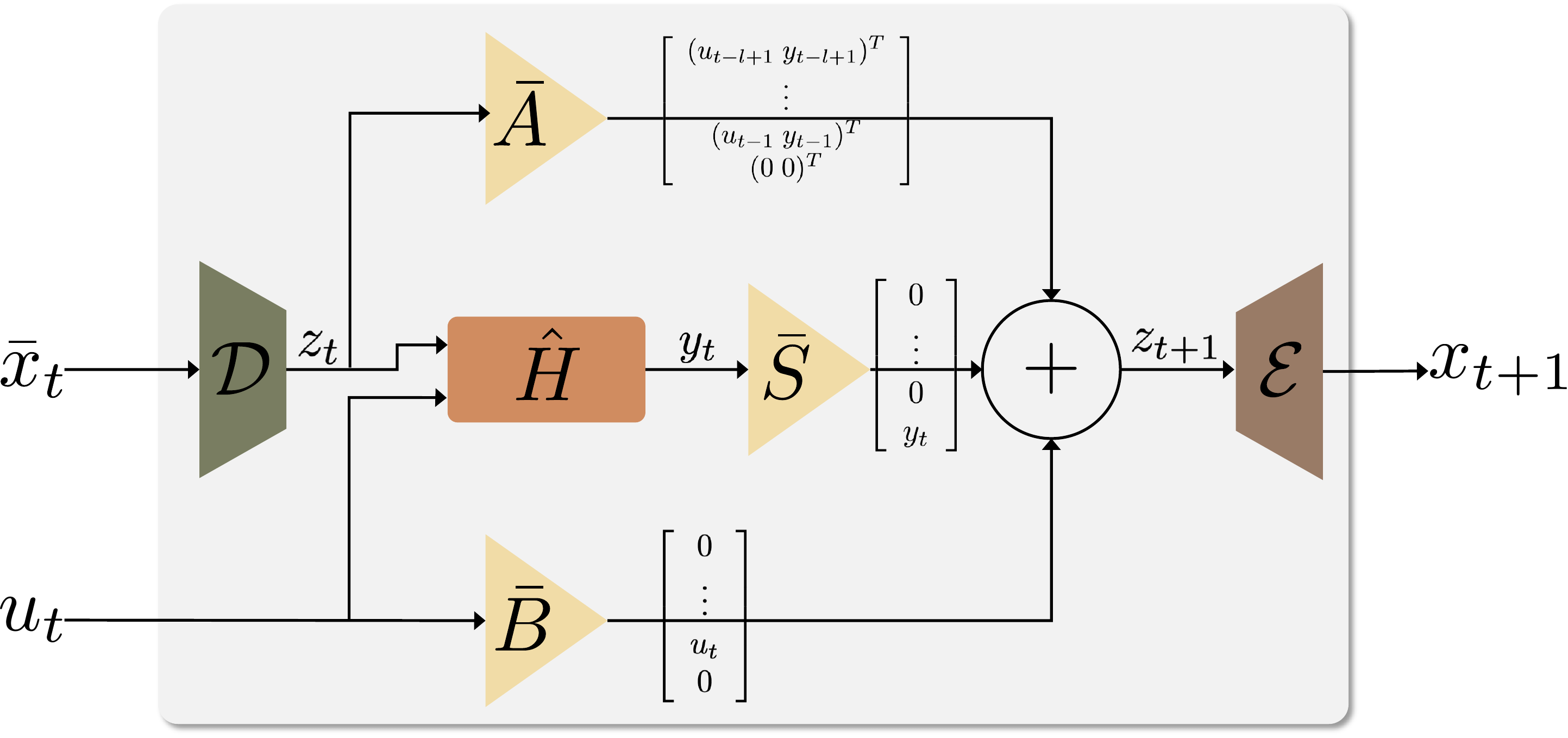}
         \caption{}
         \label{fig:state_space_equation}
     \end{subfigure}
     \begin{subfigure}[b]{\textwidth}
         \centering
         \includegraphics[width=\textwidth]{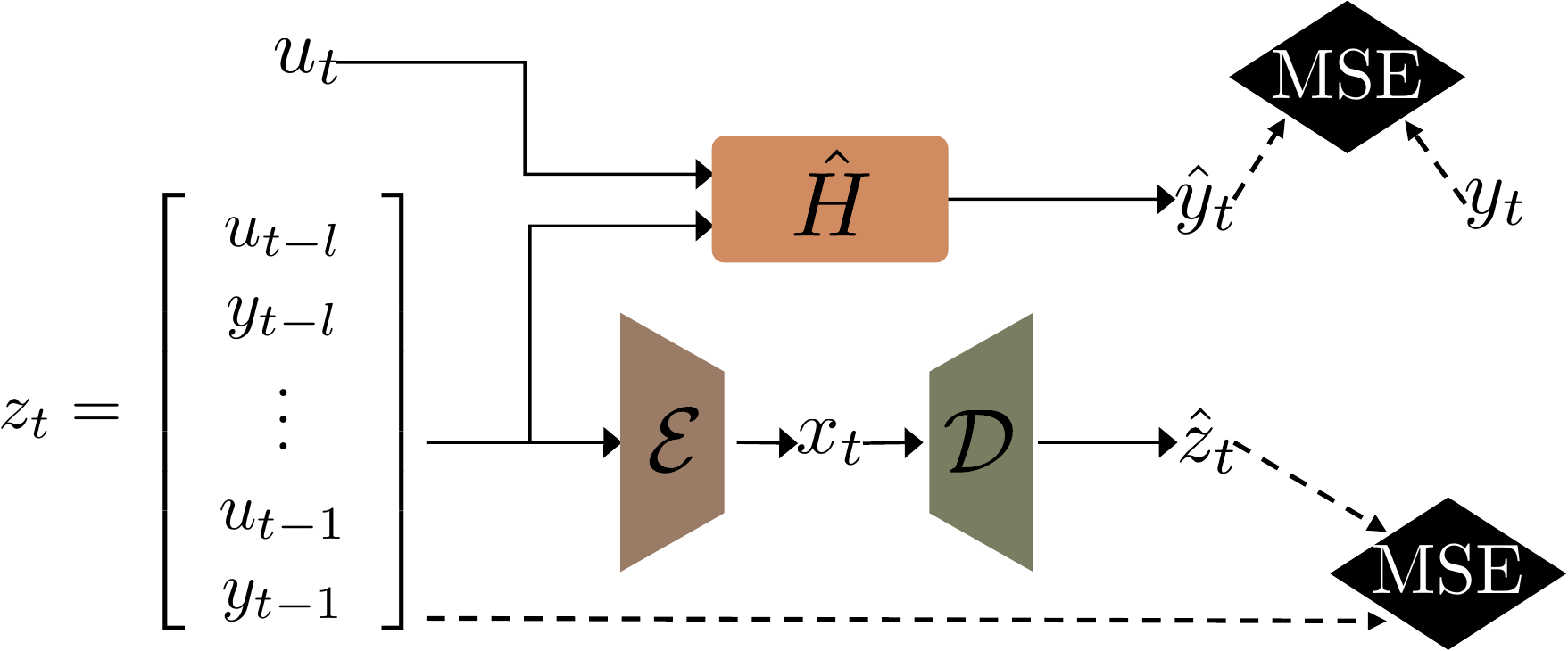}
         \caption{}
         \label{fig:learning_models}
     \end{subfigure}
    \caption{(a) Block diagram of our canonical reduced state space representation as defined in \eqref{eq:Final-State-Space-Estimate} (b) Training: the dynamic is modeled by $\hat H$ acting as a regressor from a short history of previous observations to future. The encoder-decoder model is used to reduce the size of the state $z_t$ to $x_t$. We train each network by minimizing the prediction error as well as the reconstruction error.    \vspace{-5mm}}
    \label{fig:model}
\end{figure} 
\noindent   
A starting point of our identification method for system  \eqref{eq_general_form} is to solve a nonlinear regression problem. 
To formulate this problem, note by Proposition \ref{prop:error-bound} that the true state of the system can be written as $x_t=\hat{x}_t+\delta_t$ with $\left\|\delta_t\right\|\leq \gamma_f^\ell \alpha_\ell^{-1}\left\|\bar{w}_{t-\ell|t-1} \right\|$.  Consider now plugging the state estimate \eqref{eq:xhat} into the output equation of \eqref{eq_general_form}, which gives
$$
\begin{aligned}
y_t&= h^\circ(\hat{x}_t+\delta_t,u_t)+w_t\\
& = h^\circ(\hat{x}_t,u_t)+\xi_t,
\end{aligned}
$$
with $\xi_t$ being an error component entirely due to the noise. It is indeed equal zero whenever $w\equiv 0$. 
It can be shown that $\xi_t$ can be written as $\xi_t=w_t+\tilde{\delta}_t$ with $\|\tilde{\delta}_t\|\leq \gamma_h\gamma_f^{\ell} \alpha_\ell^{-1}\left\|\bar{w}_{t-\ell|t-1} \right\|$, where $\gamma_h$ is the Lipschitz constant of the measurement  function $h^\circ$ of system  \eqref{eq_general_form}. 
Since $\hat{x}_t$ is a  function of $z_t$ we end up with 
\begin{equation}\label{eq:regression}
y_t = H^\circ(z_t,u_t)+\xi_t
\end{equation}
for some nonlinear function $H^\circ$. 
\begin{remark}
In the absence of noise, the exact expression of $ H^\circ$ is  
$ H^\circ(z_t,u_t) =h^\circ\Big( F_\ell\big(\mathcal{O}_\ell^*(z_t),\eta(z_t)\big),u_t\Big)$
with $\eta(z_t)=\bar{u}_{t-\ell|t-1}$. 
\end{remark}

The first step of the identification  method is to construct a high dimensional state-space representation whose state is the vector  $z_t$ defined in \eqref{eq:zt}. 
More precisely, consider
\begin{equation}\label{eq:state-space-Z}
\left\{\begin{aligned}
&z_{t+1}= \bar{A}z_t+\bar{B} u_t+\bar{S}   H^\circ(z_t,u_t)+\bar{S} \xi_t\\
& y_t = H^\circ(z_t,u_t)+\xi_t,
\end{aligned}
\right.
\end{equation}
where $\bar{A} = A\otimes I_{n_u+n_y}$, $\bar{B} = e_{n_x-1}\otimes I_{n_u}$, $\bar{S} = e_{n_x}\otimes I_{n_y}$, $e_i\in \Re^{n_x}$ being the canonical basis vector which has $1$ in its $i$-th entry and zero everywhere else, $\otimes$ referring to the Kronecker product  and $A\in \Re^{n_x\times n_x}$ given by the canonical form 
$$
A = \begin{pmatrix}
0 & 1 &  0& \cdots & 0\\
 \vdots &  \ddots &\ddots &\ddots& \vdots \\
0  &  \cdots &\ddots &1 & 0\\
 0& \cdots &  \cdots & 0&  1\\
 0&\cdots  &\cdots  & 0 &0
\end{pmatrix}.
$$
From \eqref{eq:regression} it can be seen that \eqref{eq:state-space-Z} constitutes a state-space representation for system  \eqref{eq_general_form} since both models have the same input-output behavior for $t\geq \ell$. 

Given a finite set of input-output data points $\left\{(u_t,y_t)\right\}_{t=1}^{T+\ell}$, an estimate of the function $ H^\circ$ can be obtained in a certain nonlinear model class $\mathcal{H}$ as 
$$\hat{H}\in \argmin_{H\in \mathcal{H}}{J(H)}, $$
where $J(H)$ is a regression loss given as
\begin{align}
    J(H) = \frac{1}{T} \sum_{t=\ell+1}^{T+\ell} \alpha_t\| y_{t} - H(\hat{z}_{t}, u_{t})\|^2 \\
    \text{ s.t. } \hat{z}_{t+1} = \bar{A}\hat{z}_t + \bar{B}u_t + \bar{S}H(\hat{z}_t, u_t),\; \hat{z}_{\ell+1}=z_{\ell+1},
\end{align}

\noindent
where $\alpha_t$ is a weighting coefficient such as $\alpha_t=1$ except for $\alpha_{\ell+1}=10$. Regression starts after a burn-in phase of $\ell$ steps (ie. the window size), which are needed to construct a full state-representation.

The choice of the model class $\mathcal{H}$ is fundamental for several reasons: it must be sufficiently large to represent a good approximation of $H^\circ$, and it should not be too large in order to ensure learnability and generalization to unseen conditions \cite{BookShalev2004}. In other words, this class needs to be complex enough to capture the behavior of the unknown nonlinear function $H^\circ$ while being easily identifiable from rather limited set of measurements. We rely on the well-documented approximation power of deep neural networks and propose to select $\mathcal{H}$ to be the class of multi-layer perceptrons (MLP), whose capacity shall be limited. In particular, the number of hidden layers and number of neurons of each layer is considered as a hyper-parameter optimized over a validation set independent of training and evaluation splits, as classically done in machine learning.

\subsection{Model reduction}
\noindent
The last step of the proposed method consists in model reduction. In effect, the model described in \eqref{eq:state-space-Z}, although structurally simple, may suffer from a high dimensional state vector $z_t$. This  may be a concern for some applications.  We therefore propose a second deep learning structure allowing nonlinear state-space model reduction by encoding the state variable $z_t$ into a low dimensional state variable $\bar{x}_t\in \Re^{\bar{n}}$ for some user-defined dimension $\bar{n}\in \mathbb{N}$. Formally, we train an auto-encoder $(\mathcal{E}, \mathcal{D})$ such that $\bar{x}_t = \mathcal{E}(z_t)$ and $z_t = \mathcal{D}(\bar{x}_t)$. By applying these maps to Eq. \eqref{eq:state-space-Z} and neglecting the noise terms, we get an approximate representation of the initial system \eqref{eq_general_form} as follows: 
\begin{equation}\label{eq:Final-State-Space-Estimate}
    \left\{\begin{array}{l}
        \bar{x}_{t+1}  = \mathcal{E}\Big(\bar A\mathcal{D}(\bar{x}_t) + \bar Bu_t + \bar S \hat{H}(\mathcal{D}(\bar{x}_t), u_t)\Big) \\
        \bar{y}_t = \hat{H}(\mathcal{D}(\bar{x}_t), u_t).
    \end{array}\right.
\end{equation}
The state space equation is summarized in figure \ref{fig:state_space_equation}. The parameters of the encoder $\mathcal{E}$ and decoder $\mathcal{D}$ are trained with the following reconstruction loss from data samples $\{z_t\}$ collected from the training set (see also Fig. \ref{fig:learning_models}).
\begin{equation}
\mathcal{(E,D)} = \arg \min_{\mathcal{E',D'}} 
\left \|
z_t - \mathcal{D'}(\mathcal{E'}(z_t))
\right  \|
\end{equation}
Note that  the dimension $\bar{n}$ of the compressed state $\bar{x}_t$ in the estimated model \eqref{eq:Final-State-Space-Estimate} is potentially different from the true state dimension  $n_x$.  
Another observation is that by going from  \eqref{eq:state-space-Z} to \eqref{eq:Final-State-Space-Estimate}, one reduces the dimension of the state vector but at the cost of introducing some structural complexity. Hence the computational price associated with simulating a model such as \eqref{eq:Final-State-Space-Estimate} may still be high depending on the complexity of the auto-encoder $(\mathcal{E}, \mathcal{D})$.  

From a formal point of view, this reduced state has several advantages. The constructed state-space $z_t$ is not part of the update equation (\ref{eq:Final-State-Space-Estimate}) anymore. We can also experimentally show (see section \ref{sec:experiments}), that this method can discover state representations of smaller size 
with a method which is generic in nature and can be applied to a broad class of problems.





\medskip
\noindent

\section{Experimental results}
\label{sec:experiments}
\begin{figure*}[t]
    \centering
    \includegraphics[width=\textwidth]{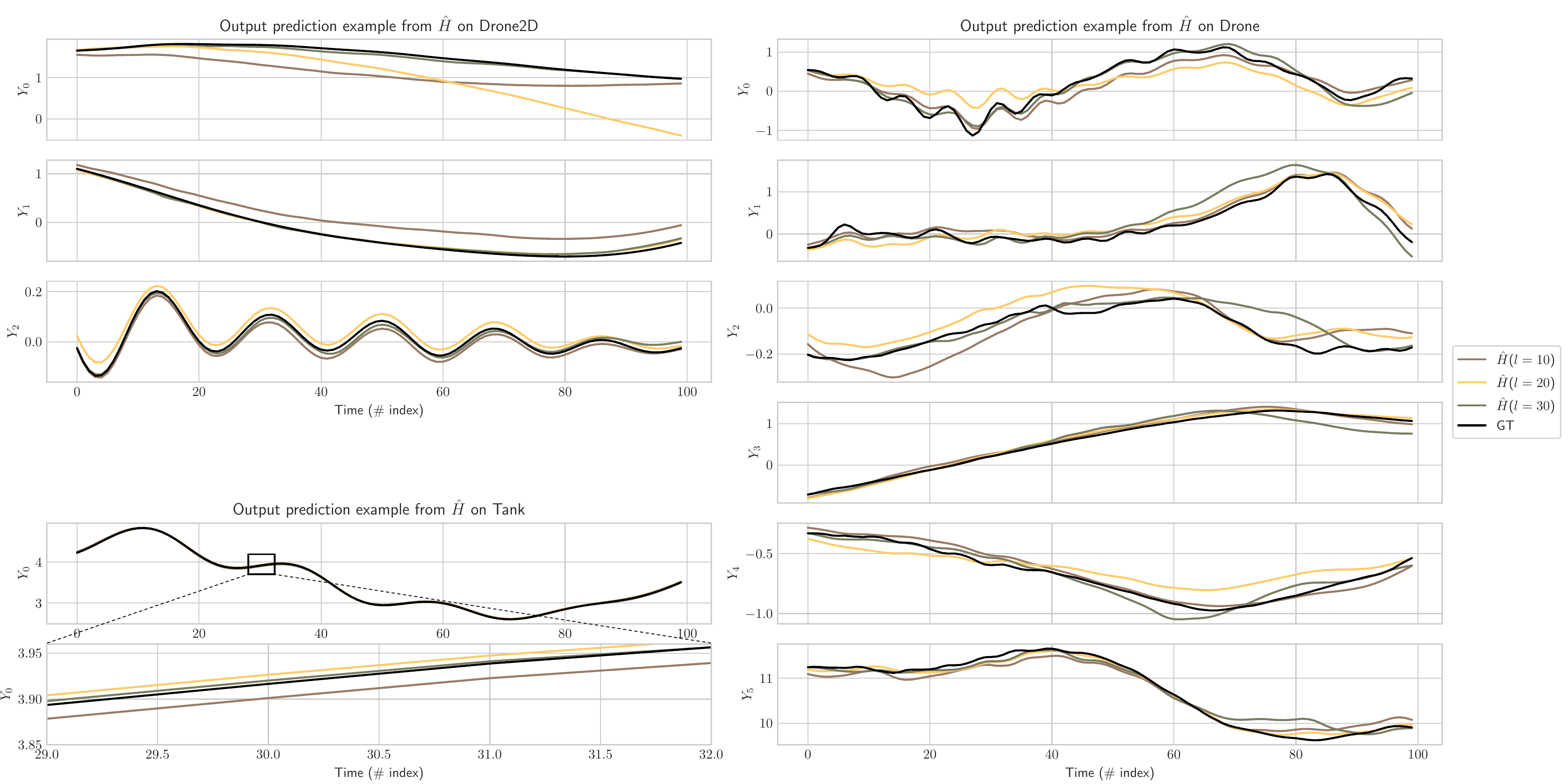}
    \caption{Visual example of the output prediction made by $\hat{H}$ (\textbf{Ours (MLP)}) for different value of $\ell$ on the three datasets. 
    \vspace{-5mm}
    }
    \label{fig:rollout}
\end{figure*}

\noindent
We illustrate and evaluate the proposed nonlinear dynamical model identification approach on the estimation and prediction of the state of a system with unknown dynamics.
To demonstrate the practical feasibility of our model, we propose to study its behavior in three different scenarios. First, we demonstrate the capabilities of our regressor function $H^\circ$ for output prediction on simulated systems.\footnote{Code and datasets will be made publicly available upon acceptance.} We also study the influence of key parameters, namely the length $\ell$ of the time window and the impact of the state reduction.

\subsection{Dynamical systems and benchmarks}
\noindent
We use two simulated and one real system to validate our contributions. 
\begin{description}[leftmargin=0.4cm]
\item[Tank] --- we test the proposed method on the cascade tank system introduced in \cite{schoukens2017three}. This system relates the water level in two connected tanks without consideration of overflow. It has the form  \eqref{eq_general_form} of a discrete-time state-space model with $f^\circ$ and $h^\circ$ implicitly instantiated as follows 
\begin{equation}
    \left\{\begin{array}{l}
         x_{1,{t+1}} =  x_{1,t} - k_1\sqrt{x_{1,t} } + k_2u_t  \\
         x_{2,t+1} = x_{2,t} +k_3\sqrt{x_{1,t} }- k_4\sqrt{x_{2,t} }  \\
         y_t = x_{2,t}+w_t,
    \end{array}\right.
\end{equation}
with   $x_t=\begin{pmatrix} x_{1,t} &  x_{2,t}\end{pmatrix}\in \Re^2$ being the state and $k_i$, $i=1,\ldots,4$ known parameters.  
\item[2D Drone] --- we introduce a model of a 2-dimensional drone, i.e.  an unmanned aerial vehicle, which moves in a 2D plane. The drone is equipped with two propellers and its dynamic is modeled by:
\begin{equation}
    \left\{\begin{array}{l}
        \ddot p_x = -\frac{k_T}{m}(\Omega_1^2+\Omega_2^2)\sin(\theta) - \frac{\gamma}{m}(\Omega_1+\Omega_2)\dot p_x \\
        \ddot p_z  =  \frac{k_T}{m}(\Omega_1^2 + \Omega_2^2)\cos(\theta) - \frac{\gamma}{m}(\Omega_1+\Omega_2)\dot p_z - g \\
        \ddot \theta = \frac{k_TL}{J}(\Omega_2^2 - \Omega_1^2) \\
        y  = (p_x\quad p_z \quad \theta)^T,
    \end{array}\right.
\end{equation}


where $(p_x, p_z)$ is the position, $k_T$ the thrust constant, $\Omega_i$ the rotationnal speed of the $i^{th}$ propeller, $L$ the length of the UAV, $m$ its mass, $J$ its inertia and $\gamma$ its friction coefficient. The main interest of such a system is its naturally unstable dynamics, which complicates the 
identification process. The system has been discretized.
\item[3D Drone] --- 
We also evaluate on recordings of the Blackbird UAV flight dataset \cite{antonini_blackbird_2018}, which consists of 10 hours of aggressive quadrotor flights, measured with an accurate motion capture device. We use this real world data to demonstrate that our observer discovers a state representation containing the same information as in the physical state without any supervision.

We processed the raw data gathered from the on-board inertial measurement unit (IMU) and propeller rotation speeds as observation and command signals. The regressor is trained to simulate the IMU measurements, i.e. acceleration and angular speed of the drone expressed in the local frame. 
\end{description}

\noindent
Noise has been added to the observations for the two simulated systems, tank and 2D Drone. More details about the dataset generation is provided in the appendix.

\subsection{Baseline methods}
\noindent
To experimentally compare our model to competing approaches from the literature, we introduce a neural baseline in the form of gated recurrent units (GRU) \cite{GRU2014}, the state-of-the-art variant of recurrent neural networks. This is a pure data-driven technique from the machine learning field, where the learned state representation directly corresponds to the hidden state vector of the GRU. For a fair comparison, we limit the size of the hidden vector to fit the corresponding size of $z_t$. We refer to this model as \textit{classic GRU}. Its update equations are given by
\begin{equation}
    \begin{array}{ll}
         h_{t+1} &= \text{GRU}([y_t;u_t], h_t), \quad h_{0} = 0 \\
         y_t &= \text{MLP}(h_t),
    \end{array}
\end{equation}
where $\text{GRU(.)}$ is a shorthand notation corresponding to the classical update equations of GRUs \cite{GRU2014}. For simplicity, and as usually done, gates have been omitted from the notation.

The baseline is evaluated in a setting which is comparable to the proposed model. In particular, the model has access to the same window of input/output pairs $[y_{t+k}, u_{t+k}]_{k=1..\ell}$ during the initial burn-in phase. However, these values do not explicitly make up the state, as in our model. This data-driven baseline is sufficiently general to be able to learn the same state representation in theory, but there is no guarantee that training will lead to this solution.

We also experiment with the model introduced in \cite{MASTI2021} which consists of an auto-encoder with a learned latent dynamics that operates on the reduced state representation. This model has been evaluated on the same tank system, yet, with a different data collection technique. Train and test trajectory in the Tank dataset as proposed in \cite{MASTI2021} are generated from PRBS-like signals, which is a classical approach for system identification. Our version of Tank dataset is much more challenging: observations are collected from closed-loop simulations with targets generated in a procedural manner and PID control. In our dataset, we took care to explore a wide range of possible states with sparse measurements in the train set to prevent over-fitting on a specific command design.


\subsection{Extension: a hybrid state-space model}
\noindent
We also introduce an extension of our model, which combines the advantages of both methodologies. It uses our proposed state representation $z_t$, but implements the mapping $H^\circ$ by a GRU in place of the MLP proposed in section \ref{sec:modelingandlearning}.
Formally, the GRU updates a zero-initialized hidden vector using the previous observations and command. This vector is then decoded by a MLP to the desired observation. Equation \eqref{eq:state-space-Z} is then used for forward prediction. We refer to this model as \textit{Ours (GRU)}, it is given as follows:

\begin{equation}
    \begin{array}{ll}
         z_{t+1} &= \bar A z_t + \bar B u_t + \bar S \ \text{MLP}(h_t) \\
         h_{i+1} &= \text{GRU}([y_i;u_i], h_i), \quad h_{t-l} = 0 \\
         y_t &= \text{MLP}(h_t).
    \end{array}
\end{equation}

\subsection{Output prediction and parameter analysis}

\begin{table*}[t]
\footnotesize
\begin{tabular}{@{}c|cccc|cccc|cccc@{}}
\toprule
\multirow{2}{*}{\shortstack{Window\\ size}} 
        & \multicolumn{4}{c|}{Tank ($\times 10^{-4}$)}    & \multicolumn{4}{c|}{2D Drone ($\times 10^{-2}$)} & \multicolumn{4}{c}{3D Drone ($\times 10^{-2}$)} \\ \cmidrule(l){2-13} 
        $\ell$ & \shortstack{Classic \\GRU$^\dagger$} & \shortstack{Masti \\et al.\cite{MASTI2021}} & \textbf{\shortstack{Ours \\(GRU)}}   &  \textbf{\shortstack{Ours \\(MLP)}}  &  \shortstack{Classic \\GRU$^\dagger$} &  \shortstack{Masti \\et al.\cite{MASTI2021}}     &\textbf{\shortstack{Ours \\(GRU)}} &  \textbf{\shortstack{Ours \\(MLP)}} &  \shortstack{Classic \\GRU$^\dagger$} &   \shortstack{Masti \\et al.\cite{MASTI2021}}     & \textbf{\shortstack{Ours \\(GRU)}} & \textbf{\shortstack{Ours \\(MLP)}}\\ \midrule
5       & 163   & 1030 & 138             & \textbf{7.14}  & 62.8 & 60.5 & 106              & \textbf{31.4}   & 24.8 & 14.6 & \textbf{6.44} & 15.2  \\
10      & 41.7  & 1070 & 5.60            & \textbf{0.930} & 82.7 & 58.6 & 68.9             & \textbf{9.95}   & 23.7 & 14.5 & \textbf{5.32} & 14.2  \\
15      & 4.57  & 957  & 3.06            & \textbf{0.960} & 61.9 & 58.2 & 35.2             & \textbf{7.52}   & 23.4 & 13.0 & \textbf{5.07} & 13.6  \\
20      & 4.04  & 914  & 1.07            & \textbf{0.761} & 78.4 & 55.3 & 19.3             & \textbf{8.06}   & 22.7 & 12.6 & \textbf{4.68} & 13.5  \\
25      & 0.600 & 915  & \textbf{0.481}  & 0.606          & 80.3 & 53.6 & 23.0             & \textbf{5.17}   & 21.3 & 12.1 & \textbf{4.83} & 13.6  \\
30      & 1.73  & 917  & \textbf{0.193}  & 0.448          & 104  & 51.3 & 25.0             & \textbf{3.13}   & 19.2 & 12.6 & \textbf{4.61} & 13.1 \\
\midrule
\multicolumn{13}{l}{$^\dagger$ {\footnotesize The size of the hidden state of each GRU model is adapted to the window size s.t. fits the size of the equivalent regressor model.}}\\ 
\bottomrule
\end{tabular}
\caption{\label{tab:regression_rollout}Quantitative evaluation: we report MSE error over 100-step rollouts by the learned regression model and compare with baselines, for different windows sizes $\ell$. Our model consistently outperforms all baselines.\vspace{-5mm}}
\end{table*}

\textbf{Output forecasting} -- the identified dynamic model can be evaluated by performing open-loop forward prediction from initial conditions and the set of inputs applied to the real system. The model then forecasts outputs, which may be compared to actual ground truth measurements. We assessed the first stage of our method using this task, i.e. the resolution of the regression problem. Table \ref{tab:regression_rollout} reports the mean squared error on 100 step roll-out predictions for each baseline and different window sizes $\ell \in \{5, 10, 15, 20, 25, 30\}$. Our method shows excellent prediction error even for low window sizes, and consistently outperforms the closest competing method from the literature, Masti et al. \cite{MASTI2021}, by a large margin. We conjecture two key arguments to justify this difference : (1) the structure proposed by \cite{MASTI2021} suffers from complex interaction between the auto-encoder and the latent dynamics that penalizes learning, and (2) the model design process over-fitted on the simpler dataset used in the original paper.

\textbf{Machine learning baseline} -- is competitive with our contribution. However, its structure forces to observe only one couple $(y_t, u_t)$ at a time. Relevant information needs to be stored in its memory, the vectorial hidden state, and this storage process is fully learned by gradient descent, a difficult process. In principle, these models can learn a state representation which is similar or even identical to our designed state-map, but there is no guarantee that this representation emerges. Our state map model can therefore be seen as a form of useful inductive bias for recurrent neural models. 

For moderate window sizes, our model benefits from the immediate availability of all the components of $z_t$ in its state. For very large window sizes or complex dynamical systems (such as 3D Drone), the GRU extension (\textit{Ours (GRU))} outperforms the MLP regressor. In this situation, the GRU takes advantage of its incrementally updated memory, and manages to manipulate the large dimension of $z_t$ by processing it piecewise, whereas the MLP must manipulate the entire vector. Figure \ref{fig:rollout} shows samples of predicted trajectory using the MLP regressor approach for each dataset.




\subsection{Model reduction}
\noindent 
The reduction step is performed downstream of the regression model training. Nevertheless, the difficulty of the reduction task is directly related to the initial size of the state representation $z_t$, that is, to the size of the window $\ell$. In order to accurately evaluate the compression capabilities of our approach, we trained several auto-encoders for each value of $\ell \in \{5, 10, 15, 20, 25, 30\}$ corresponding to different rates of compression increasing by steps of 15\%.

Figure \ref{fig:autoencoder_heatmap} shows the compression capabilities of our encoder-decoder structure for different  window sizes $\ell$. The results are consistent on the three datasets. The compression rate is more sensitive on small input dimensions, and conversely, a larger dimension can be reduced extensively with negligible loss of accuracy. Indeed, increasing the number of inputs arguably leads to an increase in the redundancies exploitable by the encoder to reduce the dimension of the state space and reconstruct it with limited deviation with respect to the initial vector.

Yet such reduction introduces noise to the state representation that the regressor will have to cope with. We thus evaluate the impact of state-space reduction on the output forecasting capabilities of our model, and summarize the results in figure \ref{fig:state_compression}. Our reduction method manages to reduce the dimension of the state in a consistent way up to 60\% for the two datasets in simulation without sensible variation of the prediction error. The error bars reflect the double dependence of our approach both on the performance of the regression model $H^\circ$ but also on the quality of the encoding-decoding. We compare favorably to the baseline in \cite{MASTI2021}.

\begin{figure}[t]
    \centering
    \includegraphics[width=\textwidth]{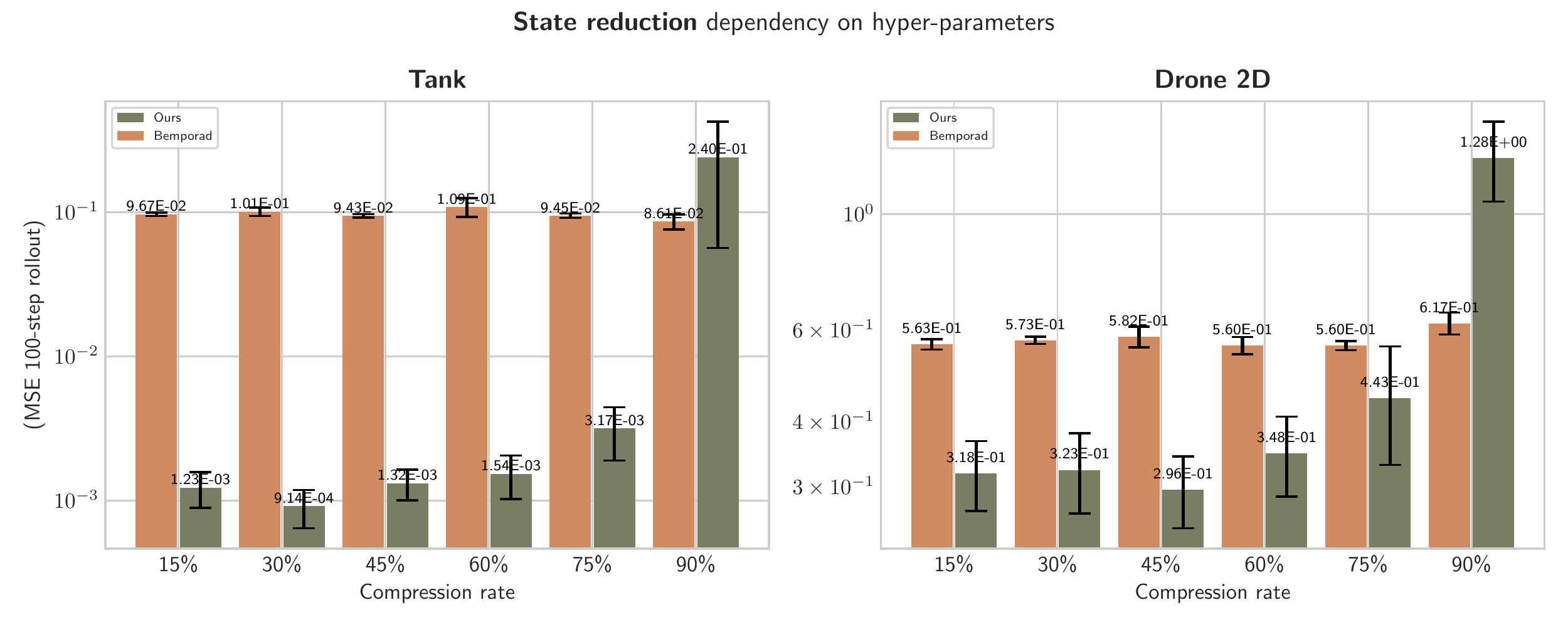}
    \caption{We studied the impact of state compression for multiple configurations of final latent state dimension and temporal window and aggregate the results by this two paramaters on the synthetic datasets. Specifically, we measure the MSE on observation prediction error for 100 step in the future (\textit{Ours (MLP)}).    \vspace{-3mm}}
    \label{fig:state_compression}
\end{figure}

\begin{figure}
    \centering
    \includegraphics[width=\textwidth]{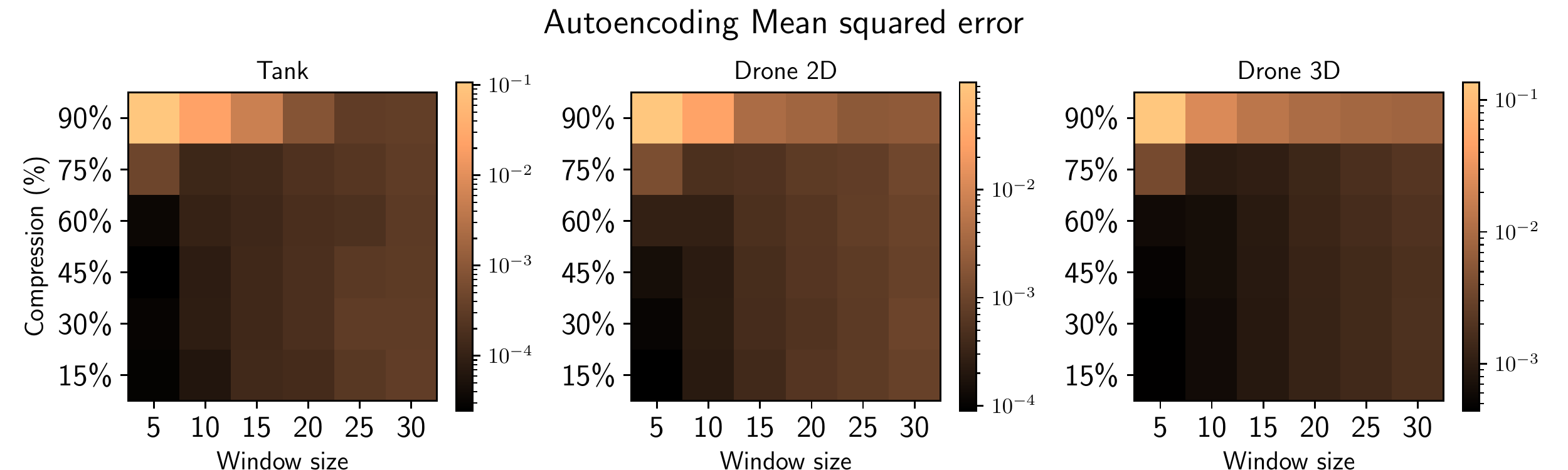}
    \caption{Heatmap of MSE for the encoder-decoder structure depending on both the window size (which relates to the initial state dimension $\text{dim }z$) and the compression rate.\vspace{-5mm}}
    \label{fig:autoencoder_heatmap}
\end{figure}





\section{Conclusion}
\noindent
In this work, we take advantage of the power of high-capacity deep neural networks to design a new methodology for estimating nonlinear dynamical systems from a set of input/output data pairs. We show that the state can be expressed as a state map computed as a function of past inputs and outputs. We learn a mapping from this representation to model outputs from training data using deep networks and show that this approach is competitive.
We tackled the problem of reducing the state space, showing that this way a state of similar size than the original problem can be obtained through machine learning with an auto-encoding solution. The proposed approach can be used to reduce the order of a given nonlinear model, such as  infinite-dimensional discrete systems. The methodology was validated using three numerical examples and using a data-set from real experiments from the literature.

\bibliography{references}

\appendix
\subsection{Dataset details}
\noindent
\textbf{Tank} -- dataset is generated by uniform sampling of five waypoints lying in $[0, 5]$ evenly distributed on time on a $200$ steps reference constructed by cubic spline interpolation between the waypoints. This reference is then tracked with a PID controller. Our dataset contains $60$ trajectories for the train set, and $20$ for both the validation and test set. For simulation, we use $k_1=0.5, k_2=0.4, k_3=0.2$ and $k_1=0.3$. 

\noindent
\textbf{3D drone} -- dataset built on the BlackBird dataset \cite{antonini_blackbird_2018}. We extracted IMU measurements and commands from raw data and apply pre-processing as follows : temporal synchronization of both signals, noise filtering have using Butterworth filters, and sampling rate reduction to $50$Hz. To create train/valid/test splits, we sampled 20 flights to form the validation and 10 for test split. The remaining 146 flights were used for the training set. Each flight have been sliced in 200-steps chunks to facilitate training.

\noindent
\textbf{2D drone} -- dataset is generated by uniform sampling of 5 to 10  2D waypoints lying in $[-2, 2]$ evenly distributed on time on a 600 steps reference constructed by cubic spline interpolation between the waypoints. This reference is then tracked with an model predictive control approach. Our dataset contains $500$ flights for training, and $20$ flights for validation and test sets. For simulation, we choose $k_T= 4\times 10^{-4}, \gamma=10^{-9}, L=0.15, m=1$ and $J=2.7\times 10^{-3}$. The system is simulated with Euler integration scheme at $30$Hz.

\subsection{Models details}
\noindent
\textbf{Classic GRU} -- is a 2 layer gated reccurent unit. The hidden vector size is chosen such that the cumulated dimension of the two hidden vectors matches the one of the corresponding state $z_t$, formally $n_h = \frac{l}{2} (n_u + n_y)$. The hidden vector is then decoded by a multilayer perceptron with one hidden unit of size $n_h$. 

\noindent
\textbf{Ours (MLP)} -- uses a MLP to model $H^\circ$ with 3 hidden units of size 256 for simulated dataset and 2 layer with 2048 units for the 3D Drone. The encoder-decoder is modeled with 2 MLP with 2 layers of 512 units.

\noindent
\textbf{Ours (GRU)} -- model $H^\circ$ with a GRU with three layers, and hidden size of 128. The encoder-decoder is identical as Ours (MLP).

Each model is implemented in Pytorch and trained with Adam optimizer, with learning rate of $10^{-4}$. We trained the regressor (both MLP and GRU) for 10,000 epochs, and the encoder for 3,000 for the simulated datasets and respectively 300 epochs for the 3D drone dataset.

\end{document}